\edef\Gin@extensions{\Gin@extensions,.mps}
\DeclareMathAlphabet{\mathpzc}{OT1}{pzc}{m}{it}
\newtheorem{theorem}{Theorem}
\newtheorem{lemma}{Lemma}[section]
\newtheorem{corollary}{Corollary}
\newtheorem{proposition}{Proposition}[section]
\newtheorem*{claim*}{Claim}
\newtheorem*{theorem*}{Theorem}
\newtheorem*{corollary*}{Corollary}
\theoremstyle{definition}
\newtheorem{definition}{Definition}[section]
\newtheorem{notation}{Notation}[section]
\theoremstyle{remark}
\newtheorem{remark}{Remark}
\definecolor{DarkBlue}{rgb}{0,0.1,0.55}
\numberwithin{equation}{section}
\newcommand {\hide}[1]{}
\newcommand {\sign} {\mbox{\bf sign}}
\newcommand {\junk}[1]{}
\newcommand {\R} {\mathrm{R}}
\newcommand {\C}     {\mathbb{C}}
\newcommand {\Real}{\mathrm{Re}}
\newcommand {\Imaginary}{\mathrm{Im}}
\newcommand {\la}   {{\langle}}
\newcommand {\ra}   {{\rangle}}
\newcommand {\eps} {{\varepsilon}}
\newcommand {\Sign}      {\mbox{\rm Sign}}
\newcommand{\card}{\mathrm{card}}
\def\addots{\mathinner{\mkern1mu
		\raise1pt\vbox{\kern7pt\hbox{.}}
		\mkern2mu\raise4pt\hbox{.}\mkern2mu
		\raise7pt\hbox{.}\mkern1mu}}
\algnewcommand\algorithmicinput{\textbf{Input:}}
\algnewcommand\INPUT{\item[\algorithmicinput]}
\algnewcommand\algorithmicoutput{\textbf{Output:}}
\algnewcommand\OUTPUT{\item[\algorithmicoutput]}
\algnewcommand\algorithmicproc{\textbf{Procedure:}}
\algnewcommand\PROCEDURE{\item[\algorithmicproc]}
\algnewcommand\algorithmiccomplexity{\textbf{Complexity:}}
\algnewcommand\COMPLEXITY{\item[\algorithmiccomplexity]}
\newcolumntype{P}[1]{>{\centering\arraybackslash}p{#1}}
\renewcommand\p@enumii{}
\title{Quantum Analog of Shannon's Lower Bound 
Theorem}
\author{Saugata Basu}
\address{Department of Mathematics,
Purdue University, West Lafayette, IN 47906, U.S.A.}
\email{sbasu@math.purdue.edu}
\urladdr{www.math.purdue.edu/~sbasu}
\author{Laxmi Parida}
\address{
IBM T.J. Watson Research Center,
Yorktown Heights, NY 10598.
}
\email{parida@us.ibm.com}
\urladdr{researcher.ibm.com/person/us-parida}
\begin{document}
\begin{abstract}
    Shannon proved that almost all Boolean functions require a circuit of size $\Theta(2^n/n)$ \cite{Shannon}. We prove a quantum analog of this classical result. Unlike in the classical case the number of quantum circuits of any fixed size that we allow  is uncountably infinite. Our main tool is a classical result in real algebraic geometry bounding the number of realizable sign conditions of any finite set of real polynomials in many variables.
\end{abstract}	
\subjclass[2000]{Primary 68Q12; Secondary 14P10, 81P68}
\keywords{Quantum circuits, Shannon's lower bound theorem, real algebraic geometry}
\maketitle

 \section{Introduction}
Boolean circuit complexity measures complexity of Boolean functions $f: \{0,1\}^n \rightarrow \{0,1\}$ by the size of the smallest circuit computing  $f$.
We assume that the gates of a circuit come from some fixed finite universal set of gates.
We refer the reader to \cite[Page 7, Definition 3.1]{Wegener} for the precise definition of a circuit.
Notice that in this case the set of circuits of bounded size is finite. 

The study of circuit complexity of Boolean functions naturally leads to the definitions of ``non-uniform'' analogs of classical complexity classes such as $\mathbf{P}$. For example, the non-uniform analog of the class $\mathbf{P}$ is the set of sequences of Boolean functions
\[
\left(f: \{0,1\}^n \rightarrow \{0,1\}\right)_{n >0}
\]
for which there exists a polynomial $p(n)$ such that for each
$n>0$, there exists a circuit of size bounded by $p(n)$ computing $f_n$. 

Shannon \cite{Shannon} showed using a counting argument (using the fact that as noted before that the set of circuits of bounded size is finite)  very early that almost all Boolean functions need circuits of size $\Omega({2^n}/{n})$ (see \cite[Page 90, Theorem 2.1]{Wegener}.
Here almost all refers to the fact that the number of Boolean functions that require circuits of size $\Omega({2^n}/{n})$ is bounded from below by $2^{2^n}(1 - o(1))$. It is also known (see for instance \cite[Page 92, Theorem 2.2]{Wegener}) that every Boolean
function $f: \{0,1\}^n \rightarrow \{0,1\}$ can be computed by a circuit of size $O({2^n}/{n})$.

\begin{remark}
\label{rem:arity-classical}
In the theorems cited above the set of gates used is assumed to be all possible gates 
of arity two (i.e. the finite set of $2^{2^2} = 16$ gates computing all possible Boolean functions 
$g:\{0,1\}^2 \rightarrow \{0,1\}$). The arity two is not important. The same asymptotic results will
hold even if we allow all gates of some bounded arity $q \geq 2$ 
(the constants will depend on the arity) .
\end{remark}

In this paper we are are concerned with proving analogous results for quantum circuits.

\hide{
In order to
define circuits one should first fix a set of gates. We assume each gate in the set has arity 
(also commonly  referred to as fan-in) bounded by a fixed constant. 
For example, for classical circuits in 
the set-up for Shannon's theorem one assumes that set of gates is a finite complete set (for example,
the AND and the NOT gates). Notice that in this case the set of circuits of bounded size is finite. 
}

Quantum complexity theory is a relatively new discipline. A quantum circuit computes a unitary transformation of a certain finite but exponentially large dimensional Hilbert space. 
We refer the reader to \cite[Definition 6.1]{Kitaev} for the precise definition of a quantum circuit.
The Hilbert space
comes equipped with a computational basis whose elements should be thought of as the elements of the
Boolean hypercube. There is a standard definition of what it means for such a circuit to compute a Boolean function $f:\{0,1\}^n \rightarrow \{0,1\}$ which we explain later in the paper. The notion of a 
size of a quantum circuit mirrors the classical definition (i.e. the number of gates) though we also take into account the number of ancillary qubits which form the workspace of the quantum circuit. But there is one crucial difference between classical and quantum circuits which we explain below.

In analogy with classical circuits (cf. Remark~\ref{rem:arity-classical}),
by a quantum circuit we will mean a circuit as defined in \cite[Definition 6.1]{Kitaev} 
where we allow all quantum gates of some bounded arity. 
Even if one fixes the arity (say $q > 0$) of a quantum gate, unlike in the classical case there is a \emph{continuum many}  choices of such a gate -- namely, each element of the 
unitary group 
\footnote{The unitary group $\mathbf{U}(N)$ is the group of $N\times N$ complex matrices $U$
satisfying $U U^{\dagger} = \mathrm{Id}_n$. It is a real Lie group of dimension $N^2$.}
$\mathbf{U}(2^q)$ is a possible quantum gate of arity $q$. The real dimension of the 
group $\mathbf{U}(2^q)$ is equal to $2^{2q}$. Thus, with such an uncountable choice of the set of gates,
the cardinality of the set of quantum circuits of size bounded by any fixed positive number is 
also uncountable. Thus it makes sense to ask whether using this additional flexibility quantum circuits of  ``small'' (in terms of $n$)  sizes can actually compute all the $2^{2^n}$ Boolean functions $f:\{0,1\}^n \rightarrow \{0,1\}$. The main result of the paper (see Corollary~\ref{cor:main}) can be colloquially framed as saying that Shannon's lower bound for classical circuits also hold
for quantum circuits (even after we allow arbitrary quantum gates of bounded arity) i.e. almost all Boolean functions $f:\{0,1\}^n \rightarrow \{0,1\}$ require quantum circuits of size $\Omega(2^n/n)$.
Since a classical circuit computing a Boolean function can be converted into a quantum circuit with at most a constant factor increase in size it follows that every Boolean function $f:\{0,1\}^n \rightarrow \{0,1\}$ can be computed by a quantum circuit of size at most $O(2^n/n)$. Thus, it is the lower bound
part of Shannon's result that is of main interest.

We prove our quantum version of Shannon's theorem by first proving an upper bound on the number
of Boolean functions that can be computed by a quantum circuit of a given size (see Theorem~\ref{thm:main} below).
A key tool in the proof of Theorem~\ref{thm:main} is a bound from real algebraic geometry (that we explain in Section~\ref{subsec:rag}) which gives an upper bound on the number of realizable sign conditions of any finite  set of real polynomials in several indeterminates of bounded degrees. This result actually bounds the zero-th Betti number of the realizations of all sign conditions of the set of polynomials and has generalizations to higher Betti numbers as well \cite{BPR02}. It has previously being used 
in proving upper bounds on the number of configurations in various geometric settings (see for example
\cite{GP2} for one such example), but to the best of our knowledge has not being used in proving lower 
bounds in quantum complexity theory.

The rest of the paper is organized as follows. In Section~\ref{sec:main} we state our main results.
In Section~\ref{sec:proof} we prove the main theorem and its corollary. In Section~\ref{sec:alternative}
we discuss an alternative approach towards proving the main result of this using variants of Solovay-Kitaev approximation and explain its deficiency. In Section~\ref{sec:conclusion} we discuss some 
possible future work.

\section{Main Results}
\label{sec:main}
We follow the usual conventions. 
We denote by $|0\ra, |1\ra$ the computational basis for each qubit.
We identify a $0$-$1$ string $x_{n-1}\cdots x_{0}$ of length $n$, with the integer
\[
x = \sum_{k=0}^{n-1} x_k \cdot 2^k,
\]
and denote the corresponding (separable) state
$|x_{n-1}\ra\otimes \cdots \otimes |x_0\ra$ simply by $|\mathbf{x}\ra$.
Similarly, we will often abbreviate 
$|0\ra \otimes \cdots  \otimes |0\ra$ by $|\mathbf{0}\ra$.

For $q \geq 1$ we denote by $\mathcal{U}_q$ the (uncountably infinite) set of quantum gates with at most  $q$ inputs.
Each gate $g \in \mathcal{U}_q$ corresponds to a unitary transformation $U_g \in \mathbf{U}(2^{k_g})$,
where $k_g \leq q$ is the arity (fan-in) of the gate $g$.
For $q \geq 2$, $\mathcal{U}_q$ is a universal family. Namely, any unitary transformation can be implemented by a quantum circuit with gates from $\mathcal{U}_2$.

A quantum circuit $C \in \mathcal{C}_1$ with $n$ input qubits, 
(whose values will be denoted by $x_1,\ldots,x_n$) and $t$ ancillary qubits
(whose values will be denoted by $z_1,\ldots,z_t$),
and  having $r$ gates drawn from $\mathcal{U}_q$ is determined by the following data:
\begin{enumerate}[1.]
    \item an ordering of the $r$ gates 
(lets suppose the ordered tuple of gates is $g_1,\ldots,g_r$, with the gate $g_i$ having arity $k_i \leq q$), and
\item 
for each $i, 1 \leq i \leq r$,  an ordered  choice of $k_i$ elements
from amongst $x_1,\ldots,x_n,z_1, \ldots z_t$.
\end{enumerate}

For $q \geq 1$, 
we will denote by $\mathcal{C}_q$, the set of quantum circuits 
using gates from the set $\mathcal{U}_q$.


For $C$, a quantum circuit with $n$ qubits as input and $t$ ancillary qubits,
we denote by $U(C) \in \mathbf{U}(N)$ the unitary transformation implemented by $C$ where $N = 2^{n+t}$. 

\hide{
Following \cite{},
for a Boolean function $f: \{0,1\}^n \rightarrow \{0,1\}$, we will denote by $U_f \in \mathbf{U}(n+1,\C)$ the unitary
transformation that for each $j, 0 \leq j \leq N-1$, and $y \in \{0,1\}$
\[
U_f(j\ra \otimes |y \ra)  =   |j \ra \otimes |f(j)\oplus y \ra.
\]
}

\hide{
\begin{definition}[Exact computation of Boolean functions by a quantum circuit]
Let $f: \{0,1\}^n \rightarrow \{0,1\}$ denote a Boolean function on $n$ Boolean bvariables.
We say that a quantum circuit $C$ on $n$-qubits using $t+1$ ancillary bits 
$y,z_1,\ldots,z_t$
\emph{computes $f$ exactly} if 
for each $j, 0 \leq j \leq 2^{n}-1$, $y \in \{0,1\}$, and some fixed 
$z_0, 0 \leq z_0 \leq 2^{t} -1$
\[
U(C)(|j \ra \otimes |y\ \ra \otimes 
| z_0 \ra
= |j \ra \otimes |f(j)\oplus y \ra
\otimes 
| z_0 \ra
\]
\end{definition}
}

We now explain what is meant by a quantum circuit computing a Boolean function. 
For ease of understanding,  we start with a provisional (very stringent) definition 
and then give a more general (much less stringent)  definition afterwards. We will use the more general (less stringent) definition in the rest of the paper noting that a lower bound result is more powerful if the definition is less stringent.

Let $f: \{0,1\}^n \rightarrow \{0,1\}$ be a Boolean function. We associate a 
unitary transformation $U_f \in \mathbf{U}(2^{n+1})$ to $f$ which takes for
$x_0,\ldots,x_{n-1},y \in \{0,1\}$, the separable state 
$|\mathbf{x} \ra \otimes |y\ra$ to the state 
$|\mathbf{x} \ra \otimes |f(x) \oplus y\ra$, where $\oplus$  denotes ``exclusive-or''.
This motivates calling 
a quantum circuit $C$ taking as input $n$ qubits $x_0,\ldots,x_{n-1}$ and an ancillary quibit
$y$, such that $U(C) = U_f$, a circuit computing \emph{stringently} the Boolean function $f$. Note that using the measurement postulate of quantum mechanics, if we 
set the input qubits to $|\mathbf{x}\ra \otimes |0\ra$ and measure
the ancillary bit in the output, we will be left in the state $|1\ra$ if $f(x) = 1$ and 
in the state $|0\ra$ if $f(x) = 0$ \emph{with probability $1$}. 

We can state our main result already for the stringent model described above. Later we will state and prove 
a more powerful result by making the definition less stringent 
to take into account a much looser notion of
an acceptable output for the quantum circuit computing a Boolean function and also allow additional ancillary bits (exponentially many).
The following theorem can be deduced from Corollary~\ref{cor:main} stated later.

For a quantum circuit $C$, we will denote by $\mathcal{G}(C)$ the set of gates of $C$.
\begin{theorem*}
For each $q \geq 1$, there exists $c=c_q, 0 < c <1$ (depending only on $q$), 
     such that for each $n >0$ 
    the number of distinct Boolean functions on $n$ variables
    that can be computed stringently by quantum circuits belonging to $\mathcal{C}_q$,  
    with 
    \[
    \card(\mathcal{G}(C)) \leq c \cdot \frac{2^n}{n}
    \]
    is bounded by 
    \[
    2^{2^{n-1}}.
    \]
    Consequently, the fraction of Boolean functions that need quantum circuits in $\mathcal{C}_q$ of size greater than $c \cdot \frac{2^n}{n}$ is $1- 2^{-2^{n-1}} = 1 - o(1)$.  
\end{theorem*}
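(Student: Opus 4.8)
The plan is to set up a parametrized family of real polynomials whose sign conditions encode the behavior of a quantum circuit on the computational basis, and then apply the real-algebraic-geometry bound on the number of realizable sign conditions. First I would fix $q \geq 1$ and a circuit ``shape'': a choice of the number of gates $r$, their arities $k_1,\dots,k_r \leq q$, and the wiring data (item 2 in the definition of $\mathcal{C}_q$, i.e. for each gate an ordered choice of inputs from $x_0,\dots,x_{n-1},y$). For the stringent model there are no extra ancillae, so $N = 2^{n+1}$ and the unitary $U(C)$ is a product of $r$ matrices, each obtained by embedding an arbitrary element $U_{g_i} \in \mathbf{U}(2^{k_i})$ into $\mathbf{U}(N)$ according to the wiring. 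The free real parameters are the real and imaginary parts of the entries of $U_{g_1},\dots,U_{g_r}$; there are at most $r \cdot 2 \cdot 2^{2q}$ of them, i.e. $O_q(r)$ parameters, call them $\bs\theta$. Each entry of $U(C)$ is then a polynomial in $\bs\theta$ of degree at most $r$ (a product of $r$ linear-in-$\bs\theta$ factors), with real and imaginary parts being real polynomials of degree $\leq r$ in $O_q(r)$ variables.

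Next I would extract, from $U(C)$, the finite set of real polynomials whose signs determine which Boolean function (if any) is computed stringently. Fix a basis vector $|\mathbf{x}\ra \otimes |0\ra$. The circuit computes $f$ stringently iff $U(C)$ sends this vector exactly to $|\mathbf{x}\ra \otimes |f(x)\ra$ for every $x$; equivalently, the relevant column of $U(C)$ has a single nonzero entry, located in the row indexed by $(\mathbf{x}, f(x))$. Whether a given entry of $U(C)$ is zero or nonzero is a sign condition on its real and imaginary parts (is $\Real^2 + \Imaginary^2$ zero or positive — or, staying with linear data, are $\Real$ and $\Imaginary$ both zero). Collecting, over all $2^n$ input strings $x$ and all $N = 2^{n+1}$ rows, the real and imaginary parts of the corresponding entries of $U(C)$ gives a family $\mathcal{P}$ of at most $s := 2 \cdot 2^n \cdot 2^{n+1} = 2^{2n+2}$ real polynomials in $k := O_q(r)$ variables, each of degree $d := r$. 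The Boolean function computed stringently by any specialization of $\bs\theta$ is determined by the sign vector of $\mathcal{P}$ at $\bs\theta$; hence the number of Boolean functions realized by circuits of this fixed shape is at most the number of realizable sign conditions of $\mathcal{P}$.

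Then I would invoke the sign-condition bound from Section~\ref{subsec:rag}: the number of realizable sign conditions of $s$ polynomials of degree $\leq d$ in $k$ variables is at most $(O(s d/k))^k$, which here is $\left(O(2^{2n} r / r)\right)^{O_q(r)} = 2^{O_q(n \cdot r)}$ (using $k = \Theta_q(r)$ and $d = r$, so $sd/k = O_q(2^{2n})$ and the bound is $(O_q(2^{2n}))^{O_q(r)} = 2^{O_q(nr)}$). Summing over all circuit shapes with at most $R := c \cdot 2^n/n$ gates: the number of shapes is at most (number of arity choices)$^R$ times (number of wiring choices per gate)$^R$, i.e. $q^R \cdot \big((n+1)^q\big)^R = 2^{O_q(R \log n)} = 2^{O_q(2^n/n \cdot \log n)} = 2^{o(2^n)}$, and each shape contributes $2^{O_q(nR)} = 2^{O_q(2^n)}$ functions with an absolute constant in the exponent proportional to $c$. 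Choosing $c = c_q$ small enough that this constant is below $\tfrac14$ (say), and absorbing the $2^{o(2^n)}$ factor from the number of shapes, the total number of stringently computable Boolean functions is at most $2^{2^{n-1}}$, as claimed. The final sentence about the fraction is then immediate since there are $2^{2^n}$ Boolean functions on $n$ variables and $2^{2^{n-1}}/2^{2^n} = 2^{-2^{n-1}} = o(1)$.

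The main obstacle — and the only genuinely delicate point — is the bookkeeping that turns ``$U(C)$ sends $|\mathbf{x}\ra\otimes|0\ra$ exactly to $|\mathbf{x}\ra\otimes|f(x)\ra$'' into a clean finite system of polynomial sign conditions whose sign vector pins down $f$ \emph{uniquely} (so that distinct sign vectors, rather than distinct parameter points, are what we count), while simultaneously keeping the degree bound at $d = r$ and the variable count at $O_q(r)$ so that $sd/k$ stays $O_q(2^{2n})$ and the exponent of the final bound is linear in $r = \Theta(2^n/n)$; any blow-up here (e.g. using $\Real^2+\Imaginary^2$ naively doubles the degree, and careless wiring encodings could inflate the parameter count) would weaken the constant but not the asymptotic shape of the result. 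The more general (less stringent) model with exponentially many ancillae is handled by Corollary~\ref{cor:main}; the point there is that even $N = 2^{n+t}$ with $t$ subexponential keeps $s$ and $k$ within the same regime, since $r \leq c \cdot 2^n/n$ gates acting on $n+t$ wires still give entrywise polynomials of degree $\leq r$ in $O_q(r)$ parameters, and the loosened acceptance criterion only enlarges the polynomial family $\mathcal{P}$ by a $2^{O(t)}$ factor, harmless as long as $t = o(2^n/n)$.
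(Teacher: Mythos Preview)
Your proposal is correct and follows the same route as the paper: fix the circuit topology, parametrize the gates by the $O_q(r)$ real and imaginary parts of their entries so that the matrix entries of $U(C)$ become degree-$r$ polynomials, apply the sign-condition bound (Proposition~\ref{prop:sign-condition}), and sum over the $\leq q^r(n+1)^{qr}$ topologies. The only cosmetic differences are that the paper works with the squared-modulus polynomials $|\la\cdot\,|\,U(C)\,|\,\cdot\ra|^2 - 1/2$ (degree $2r$) adapted to the less-stringent Definition~\ref{def:quantum-boolean}, proves the stronger Corollary~\ref{cor:main} (allowing up to $2^n$ ancillae), and then deduces the stringent statement via Remark~\ref{rem:approximate-implies-exact}, whereas you argue the stringent case directly using the real and imaginary parts.
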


\begin{remark}
Notice that since the set of gates $\mathcal{U}_q$ is infinite (uncountably so), and hence the number of 
distinct quantum circuits of any bounded size is also uncountably infinite. So a priori there is no reason for such circuits not being able to compute all Boolean functions on $n$ variables.
\end{remark}

\begin{remark}
    A lower bound similar to the one in the theorem stated above with a finite choice of the gate set has appeared in the literature \cite[Claim F.1]{chia2021quantum}. The proof of the result in \cite{chia2021quantum} strongly uses the finiteness of the set of gates and indeed the constant in the theorem depends on the cardinality of this set. The import of our result is that the choice of quantum gates we allow is uncountable (only the arity is fixed) -- while the lower bound stays the same.
\end{remark}

As mentioned before we will work with a more general notion of what it means for a 
quantum circuit to compute a Boolean function. We relax our prior definition in two
ways. First, instead of ending up in the right state depending on the value of the function $f$ 
with probability $1$, we are satisfied if we end at the right state with probability $> 1/2$.
We will also allow more than one ancillary bits. The precise  definition is as follows.

\begin{definition}[Computation of Boolean functions by a quantum circuit]
\label{def:quantum-boolean}
A quantum circuit $C$ on $n$ qubits denoted $x_0, \ldots,x_{n-1}$
and $t+1$ ancillary bits 
denoted by $y,z_1,\ldots,z_t$,
 \emph{computes $f$}, 
if 
for each $x,x', 0 \leq x,x' \leq 2^{n}-1$, $y \in \{0,1\}$, 

\hide{
\begin{eqnarray*}
|a_{j,j',y,0}|^2 &> & 1/2 \mbox{ if $j' = j$, $y = 0$ and $f(j) = 1$ or $j' = j, y =1$ and $f(j) = 0$}, \\
|a_{j,j',y,0}|^2 &< & 1/2 \mbox{ otherwise},
\end{eqnarray*}
}
\begin{eqnarray}
\label{eqn:def:quantum-boolean}
\nonumber
|\la \mathbf{x}' \;y \; \mathbf{0}  \; |\; U(C) \;|\; \mathbf{x} \; 0 \; \mathbf{0} \ra|^2 &>& 1/2 \text{ if $x' = x$ and $y\oplus f(x_0,\ldots,x_{n-1}) = 0$},\\  
&<&  1/2 \text{ otherwise.}
\end{eqnarray}

\hide{
$a_{j,j',y,z}$ is defined by the equation,
\[
U(C)(|j \ra \otimes |y\ra \otimes 
| z \ra
=
\sum_{0 \leq j' \leq 2^{n}-1, 0 \leq z \leq 2^{t}-1} a_{j,j',y,z} |j'\ra \otimes |y\ra  \otimes |0\ra \otimes |z\ra.
\]
}
($|\mathbf{x} \ra \otimes |y \ra \otimes |\mathbf{0}\ra$ is abbreviated as $|\mathbf{x} \; y\; \mathbf{0}\ra$).
\end{definition}

\begin{remark}
    \label{rem:approximate-implies-exact}
    Note that if a quantum circuit computes a Boolean function $f$ stringently, then it also computes $f$ according to Definition~\ref{def:quantum-boolean}. Also note that the inequality appearing
    in \eqref{eqn:def:quantum-boolean} is the most relaxed possible in as much as we do not insist
    on any positive gap between the accepting and rejecting probabilities. This point will be important 
    later when we discuss an alternative possible method for obtaining lower bounds using the Solovay-Kitaev approximation theorem (see Section~\ref{sec:alternative}).
\end{remark}

We now fix a notion for size of quantum circuits.

\begin{definition}
    For $C$ a quantum circuit taking as input $n$ qubits. We denote by $t(C)$ the number of anicllary qubits used by $C$ and $r(C)$ the number of gates in $C$. We will denote by $s(C) = t(C) + r(C)$
    and call $s(C)$ the \emph{size} of $C$. We will denote the set of all quantum circuits $C \in \mathcal{C}_q$ with $n$ inputs, and with $t(C) = t$ and $r(t) =  r$, by
    $\mathcal{C}_{q,n,r,t}$.
\end{definition}

Our main theorem is the following.

\begin{theorem}
\label{thm:main}
     The 
     number of distinct Boolean functions $f: \{0,1\}^n \rightarrow \{0,1\}$ which can be computed approximately by a quantum circuit belonging to $\mathcal{C}_{q,n,r,1+t}$ 
     is bounded by
    \[
    t^{q \cdot r} \cdot (2^{n} \cdot  r)^{c \cdot 2^{2q} \cdot r} 
 \]
    for some universal constant $c > 0$.
\end{theorem}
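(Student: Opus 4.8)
The plan is to partition the circuits in $\mathcal{C}_{q,n,r,1+t}$ according to their discrete ``wiring pattern'', to bound the number of such patterns, and then, for each fixed pattern, to bound the number of Boolean functions that arise as the (continuously varying) gate unitaries range over their respective unitary groups, using the bound on realizable sign conditions recalled in Section~\ref{subsec:rag}. For the first part: a circuit in $\mathcal{C}_{q,n,r,1+t}$ acts on $m:=n+1+t$ wires (the $n$ inputs together with the $1+t$ ancillary bits $y,z_1,\dots,z_t$), and its discrete data amounts, for each of the $r$ gate slots, to an arity $k_i\le q$ and an ordered tuple of $k_i$ distinct wires; since one slot admits at most $\sum_{k=1}^{q} m!/(m-k)!\le q\,m^{q}$ choices, the number of wiring patterns is at most $q^{r}(n+1+t)^{qr}$.

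Next I would fix a wiring pattern $\tau$ and describe the corresponding family of circuits algebraically. Such a circuit is determined by a tuple $\vec U=(U_{g_1},\dots,U_{g_r})$ with $U_{g_i}\in\mathbf{U}(2^{k_i})$; regarding the entries of each $U_{g_i}$ as complex variables and then taking real and imaginary parts, this is a point of $\mathbb{R}^{\ell}$ with $\ell=\sum_i 2\cdot 2^{2k_i}\le 2^{2q+1}r$ (constrained to the real algebraic subset $V=\prod_i\mathbf{U}(2^{k_i})$, but I will only use $V\subseteq\mathbb{R}^{\ell}$). The full unitary $U(C)\in\mathbf{U}(2^{m})$ is the product $\widetilde U_{g_r}\cdots\widetilde U_{g_1}$ of the ``expanded'' gate operators, $\widetilde U_{g_i}$ acting as $U_{g_i}$ on the $k_i$ tensor factors chosen at slot $i$ and as the identity on the remaining factors. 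In the computational basis each entry of $\widetilde U_{g_i}$ is either $0$ or a single entry of $U_{g_i}$, hence a polynomial of degree $\le 1$ in the coordinates of $\mathbb{R}^{\ell}$; therefore every entry of $U(C)$ is a polynomial of degree $\le r$ in those coordinates, and for each $x\in\{0,1\}^{n}$ the quantity
\[
p_x(\vec U)\;:=\;\bigl|\, \la \mathbf{x}\;1\;\mathbf{0}\;|\;U(C)\;|\;\mathbf{x}\;0\;\mathbf{0} \ra \,\bigr|^{2}-\tfrac{1}{2}
\]
is a real polynomial of degree $\le 2r$ (a sum of two squares of degree-$\le r$ real polynomials, minus $1/2$). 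Thus $\mathcal{P}_\tau:=\{p_x : x\in\{0,1\}^{n}\}$ is a family of $2^{n}$ real polynomials of degree $\le 2r$ in $\ell\le 2^{2q+1}r$ variables.

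The key point I would then establish is that the Boolean function computed by a circuit of pattern $\tau$ is determined by the sign condition of $\mathcal{P}_\tau$ at $\vec U$: by Definition~\ref{def:quantum-boolean}, if the circuit with data $(\tau,\vec U)$ computes $f$, then for every $x$ one has $f(x)=1$ exactly when $|\la \mathbf{x}\;1\;\mathbf{0}\;|\;U(C)\;|\;\mathbf{x}\;0\;\mathbf{0} \ra|^{2}>1/2$, i.e.\ exactly when $p_x(\vec U)>0$; hence two pattern-$\tau$ circuits realizing the same sign condition of $\mathcal{P}_\tau$ compute the same Boolean function. Therefore the number of Boolean functions computable by pattern-$\tau$ circuits is at most the number of realizable sign conditions of $\mathcal{P}_\tau$ on $\mathbb{R}^{\ell}$; by the bound of Section~\ref{subsec:rag} — which bounds the number of realizable sign conditions of $s$ polynomials of degree $\le d$ in $\ell$ real variables by $(sd)^{O(\ell)}$ — this is at most $(2^{n}\cdot 2r)^{O(2^{2q+1}r)}=(2^{n}r)^{c_1\,2^{2q}\,r}$ for an absolute constant $c_1$. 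Multiplying by the number $q^{r}(n+1+t)^{qr}$ of wiring patterns and absorbing everything except the factor $t^{qr}$ into a single power $(2^{n}r)^{c\,2^{2q}r}$ via routine elementary inequalities (such as $q^{r}\le(2^{n}r)^{2^{2q}r}$, and $(n+1+t)^{qr}\le t^{qr}(2^{n}r)^{2^{2q}r}$, checked by casing on whether $t\ge n+1$), I obtain the asserted bound $t^{q\cdot r}\cdot(2^{n}\cdot r)^{c\cdot 2^{2q}\cdot r}$, which is Theorem~\ref{thm:main}.

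I expect the crux to be the algebraic step: verifying that every entry of $U(C)$, and hence each $p_x$, is a polynomial whose degree ($\le 2r$) and whose number of variables ($O(2^{2q}r)$) are controlled exactly as above — this is what produces the exponent $c\cdot 2^{2q}\cdot r$ and the separate factor $t^{qr}$ in the statement, and it is the place where one must handle the composition of the expanded gate operators carefully. Everything downstream — the reduction of ``$C$ computes $f$'' to a sign condition on $\mathcal{P}_\tau$, and the final assembly — is then a direct application of the quoted sign-condition estimate together with absorption of constants into $c$.
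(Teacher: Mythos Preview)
Your proposal is correct and follows essentially the same approach as the paper: bound the number of wiring patterns (the paper's Lemma~\ref{lem:topology}), express the relevant squared amplitudes as degree-$\le 2r$ real polynomials in $O(2^{2q}r)$ variables (the paper's Lemma~\ref{lem:coefficients}), and then invoke Proposition~\ref{prop:sign-condition} per pattern (the paper's Lemma~\ref{lem:0-1}) before multiplying. The only difference is that you use the $2^{n}$ polynomials $p_x$ coming from the diagonal amplitudes $|\la \mathbf{x}\,1\,\mathbf{0}|U(C)|\mathbf{x}\,0\,\mathbf{0}\ra|^{2}-\tfrac12$, whereas the paper takes all $2^{2n+1}$ polynomials indexed by $(x,x',y)$; since the number of polynomials only enters the base in Proposition~\ref{prop:sign-condition}, this changes nothing in the final bound.
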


Theorem~\ref{thm:main} has the following important corollary.
\begin{corollary}
\label{cor:main}
    There exists $c=c_q, 0 < c <1$ (depending only on $q$), such that for each $n >0$, 
    the number of distinct Boolean functions on $n$ variables
    that can be computed  by quantum circuits belonging to 
    \[
    \mathcal{C}_{q,n, c \cdot \frac{2^n}{n}, 2^n}
    \]
    is bounded by 
    \[
    2^{2^{n-1}}.
    \]
    Consequently, the fraction of Boolean functions that need quantum circuits of size  
    $\Omega(\frac{2^n}{n})$ is $1- 2^{-2^{n-1}} = 1 - o(1)$.  
\end{corollary}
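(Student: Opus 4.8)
The plan is to obtain Corollary~\ref{cor:main} from Theorem~\ref{thm:main} by specializing the parameters to $r = c_q\cdot 2^n/n$ and $2^n$ ancillary qubits, taking base-$2$ logarithms, and comparing the resulting count with the total number $2^{2^n}$ of Boolean functions on $n$ variables. There is one bookkeeping point to settle first: $\mathcal{C}_{q,n,r,t}$ is the set of circuits with \emph{exactly} $r$ gates and \emph{exactly} $t$ ancillary qubits, whereas ``needing a circuit of size $\Omega(2^n/n)$'' is a statement about circuits with \emph{at most} that many gates and ancillas. To bridge this I would note that inserting an arity-$1$ gate whose unitary is the identity $\Id\in\mathbf{U}(2)$ (such a gate lies in $\mathcal{U}_q$ for every $q\ge 1$), and likewise adjoining an ancillary qubit that is touched by no gate, changes neither $U(C)$ on inputs with all ancillas set to $|\mathbf{0}\ra$ nor the Boolean function computed in the sense of Definition~\ref{def:quantum-boolean}. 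Hence any $f$ computable by a circuit in $\mathcal{C}_q$ with at most $r_0$ gates and at most $t_0$ ancillas is also computable by a circuit in $\mathcal{C}_{q,n,r,t}$ for every $r\ge r_0$ and $t\ge t_0$; since $c_q<1$ forces $c_q\cdot 2^n/n<2^n$, every Boolean function computable by a circuit of size at most $c_q\cdot 2^n/n$ is computable by some circuit in $\mathcal{C}_{q,n,\lfloor c_q\cdot 2^n/n\rfloor,\,2^n}$.

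Next I would apply Theorem~\ref{thm:main} with $r=\lfloor c_q\cdot 2^n/n\rfloor$ and $1+t=2^n$ (so that the parameter $t$ occurring in the bound equals $2^n-1$, in particular $\log_2 t<n$), giving the upper bound
\[
B \;=\; t^{q r}\cdot\bigl(2^n\cdot r\bigr)^{c\cdot 2^{2q}\cdot r}
\]
for the number of Boolean functions in question, $c$ being the universal constant of Theorem~\ref{thm:main}. Taking $\log_2$ and using $\log_2 t<n$ together with $\log_2(2^n r)<2n$ (the latter because $r<2^n$), one finds
\[
\log_2 B \;<\; q r n + 2c\cdot 2^{2q}\cdot r n \;=\; r n\,\bigl(q+2c\cdot 2^{2q}\bigr) \;\le\; c_q\cdot 2^n\,\bigl(q+2c\cdot 2^{2q}\bigr),
\]
where the last inequality uses $rn\le c_q\cdot 2^n$. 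Choosing
\[
c_q \;=\; \frac{1}{2\,\bigl(q+2c\cdot 2^{2q}\bigr)}
\]
-- which lies in $(0,1)$ and depends only on $q$ -- then forces $\log_2 B<2^{n-1}$, i.e. $B\le 2^{2^{n-1}}$. Combined with the reduction of the first paragraph, this is the first assertion of the Corollary.

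Finally, since there are exactly $2^{2^n}$ Boolean functions $f\colon\{0,1\}^n\to\{0,1\}$, the fraction of them computable by a quantum circuit in $\mathcal{C}_q$ of size at most $c_q\cdot 2^n/n$ is at most $2^{2^{n-1}}/2^{2^n}=2^{-2^{n-1}}$, so the complementary fraction -- those requiring size $\Omega(2^n/n)$, i.e. size exceeding $c_q\cdot 2^n/n$ -- is at least $1-2^{-2^{n-1}}=1-o(1)$, as claimed. I do not expect any genuine obstacle here: all the mathematical content sits in Theorem~\ref{thm:main}, and the only things to watch are that the chosen $c_q$ really depends only on $q$ (through $2^{2q}$ and the universal constant $c$), and that the padding argument of the first paragraph is applied so that ``size $\le c_q\cdot 2^n/n$'' is genuinely captured by the fixed parameter set $\mathcal{C}_{q,n,\lfloor c_q\cdot 2^n/n\rfloor,\,2^n}$.
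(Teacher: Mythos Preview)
Your proposal is correct and follows essentially the same approach as the paper: substitute $r\approx c_q\cdot 2^n/n$ and $t\approx 2^n$ into the bound of Theorem~\ref{thm:main}, take logarithms, and choose $c_q$ so that the resulting count falls below $2^{2^{n-1}}$. The paper's proof is terser---it absorbs the $t^{qr}$ factor directly into a single constant $c=c_q$ and writes the bound as $(2^n r)^{cr}$, then sets $r\le 2^n/(4cn)$---but the arithmetic is the same. Your explicit padding argument (identity gates and unused ancillas) to pass from ``at most $r_0$ gates, at most $t_0$ ancillas'' to the fixed class $\mathcal{C}_{q,n,r,t}$ is a point the paper leaves implicit, and your explicit formula for $c_q$ makes the dependence on $q$ transparent; neither changes the substance.
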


\hide{
We will denote by 
\begin{eqnarray*}
|\psi(C)\ra &=& \sum_{j=0}^{N-1} a_j(C) |j\ra \\
&=& U(C)(|0\ra).  
\end{eqnarray*}
}

\section{Proof of Theorem~\ref{thm:main} and Corollary~\ref{cor:main}}
\label{sec:proof}
For $g \in \mathcal{G}(C)$, we denote by $U_{g,C}$ the $2^{k_g} \times 2^{k_g}$
unitary matrix corresponding to $g$,  where $k_g$ is the arity of the gate $g$.

Let $C \in \mathcal{C}_{q,n,r,t}$ and denote the sequence of gates of $C$ 
    by $g_1,\ldots,g_r$ with corresponding  arities $k_1,\ldots,k_r$.
    For each $i, 1 \leq i \leq r$, denote the entries of the two 
    real $2^{k_i} \times 2^{k_i}$ matrices,
    $\Real(U_{g_i,C}), \Imaginary(U_{g_i,C})$ by 
    $\left(v^{(i)}_{h,h'}\right)_{1 \leq h,h'\leq 2^{k_i}}$ and
    $\left(w^{(i)}_{h,h'}\right)_{1 \leq h,h' \leq 2^{k_i}}$.
    So,
    \[
    U_{g_i,C} = \left(v^{(i)}_{h,h'}\right)_{1 \leq h,k \leq 2^{k_i}} + \sqrt{-1}\cdot \left(w^{(i)}_{h,h'}\right)_{1 \leq h,h' \leq 2^{k_i}}.
    \]

    We use the notation introduced above in the following lemma. We use the convention that
    upper case letters with indices such as $V^{(i)}_{h,h'}, W^{(i)}_{h,h'}$ are used to denote indeterminates in certain polynomials and the corresponding lower case letters $v^{(i)}_{h,h'}, w^{(i)}_{h,h'}$ are used to denote real numbers to which the corresponding indeterminates 
    are specialized.
    
\begin{lemma}
\label{lem:coefficients}
    For each $x,x',  0 \leq x,x' \leq 2^{n}-1, z,z', 0 \leq z,z' \leq 2^{t}-1$, there exists a real polynomial
    \[
    P_{| \mathbf{x}\mathbf{z}\ra,|\mathbf{x}'\mathbf{z}'\ra,C} \in \R\left[\left(V^{(i)}_{h,h'}, W^{(i)}_{h,h'}\right)_{\substack{1 \leq i \leq r,\\ 1 \leq h,h' \leq 2^{k_i}}}\right], 
    \]
    with $\deg(P_{| \mathbf{x}\mathbf{z}\ra,|\mathbf{x}'\mathbf{z}'\ra,C}) \leq  2 r$, 
    such that
    \[
        |\la \mathbf{x} \mathbf{z}\;|\;U(C) \;|\; \mathbf{x}'\mathbf{z}' \ra|^2 = 
        P_{| \mathbf{x}\mathbf{z}\ra,|\mathbf{x}'\mathbf{z}'\ra,C}\left((v^{(i)}_{h,h'},w^{(i)}_{h,h'})_{\substack{1 \leq i \leq r,\\ 1 \leq h,h' \leq 2^{k_i}}}\right).
    \]
\end{lemma}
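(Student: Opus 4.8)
The plan is to exploit the fact that, by the very definition of a quantum circuit, $U(C)$ is a product of $r$ unitary operators on the full $N$-dimensional space ($N = 2^{n+t}$), one per gate, and that the matrix entries of each such operator are polynomials of degree at most one in the real and imaginary parts of the corresponding gate matrix. The degree bound $2r$ will then fall out of a matrix-product degree count together with the passage from a complex amplitude to its squared modulus.

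First I would make precise the lift of a single gate to the full Hilbert space. Fix $i$ with $1 \leq i \leq r$. The gate $g_i$ acts on an ordered tuple of $k_i$ of the $n+t$ wires (this is exactly the data specified in the definition of a circuit), and the operator it induces on $(\mathbb{C}^2)^{\otimes(n+t)}$ is $U_{g_i,C}$ on those tensor factors and the identity on the remaining ones. Denote this operator by $\widetilde{U}_i \in \mathbf{U}(N)$. Indexing the computational basis of the full space by $(n+t)$-bit strings, the structural observation is that $\langle \mathbf{a} \,|\, \widetilde{U}_i \,|\, \mathbf{b}\rangle$ equals $\langle \mathbf{a}' \,|\, U_{g_i,C} \,|\, \mathbf{b}'\rangle$ when $\mathbf{a}$ and $\mathbf{b}$ agree on all wires outside the support of $g_i$ (here $\mathbf{a}', \mathbf{b}'$ are the restrictions of $\mathbf{a}, \mathbf{b}$ to the support wires), and equals $0$ otherwise. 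Hence, regarded as a polynomial in the indeterminates $V^{(i)}_{h,h'}, W^{(i)}_{h,h'}$, every entry of $\widetilde{U}_i$ is either the zero polynomial or a single linear form $V^{(i)}_{h,h'} + \sqrt{-1}\, W^{(i)}_{h,h'}$; in particular it has degree at most $1$, with coefficients in $\{0, 1, \sqrt{-1}\}$.

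Next I would expand the product $U(C) = \widetilde{U}_r\, \widetilde{U}_{r-1}\cdots \widetilde{U}_1$ (in the order dictated by the circuit). Writing its $(\mathbf{x}\mathbf{z}, \mathbf{x}'\mathbf{z}')$ entry as a sum over intermediate basis states $\mathbf{j}_1, \ldots, \mathbf{j}_{r-1}$ of the product of the corresponding entries of the $\widetilde{U}_i$, one obtains a polynomial $Q_{|\mathbf{x}\mathbf{z}\rangle, |\mathbf{x}'\mathbf{z}'\rangle, C}$ in all the indeterminates $\left(V^{(i)}_{h,h'}, W^{(i)}_{h,h'}\right)$, with coefficients in $\mathbb{C}$, of total degree at most $r$ (a product of $r$ factors each of degree at most $1$). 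Splitting $Q = A + \sqrt{-1}\, B$ into its real and imaginary parts yields real polynomials $A, B \in \R[\cdots]$ of degree at most $r$ with $\langle \mathbf{x}\mathbf{z} \,|\, U(C) \,|\, \mathbf{x}'\mathbf{z}'\rangle = A(\cdots) + \sqrt{-1}\, B(\cdots)$ under the specialization $V^{(i)}_{h,h'} \mapsto v^{(i)}_{h,h'}$, $W^{(i)}_{h,h'} \mapsto w^{(i)}_{h,h'}$. Finally I would set $P_{|\mathbf{x}\mathbf{z}\rangle, |\mathbf{x}'\mathbf{z}'\rangle, C} := A^2 + B^2$, which lies in the required real polynomial ring, has degree at most $2r$, and satisfies $|\langle \mathbf{x}\mathbf{z} \,|\, U(C) \,|\, \mathbf{x}'\mathbf{z}'\rangle|^2 = P_{|\mathbf{x}\mathbf{z}\rangle, |\mathbf{x}'\mathbf{z}'\rangle, C}(\cdots)$ by construction.

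The hard part, such as it is, is the first step: writing down the lift $\widetilde{U}_i$ with the tensor-factor reindexing forced by the ordered choice of wires, and checking cleanly that each of its entries is a single indeterminate linear form or zero. Everything after that — the degree bookkeeping for the matrix product and the reduction of the squared modulus to $A^2 + B^2$ — is immediate and routine.
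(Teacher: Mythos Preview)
Your proposal is correct and follows essentially the same approach as the paper: write $U(C)$ as a product of $r$ lifted gate operators whose entries are linear in the indeterminates $V^{(i)}_{h,h'}, W^{(i)}_{h,h'}$, conclude that each entry of $U(C)$ has degree at most $r$, and then take the squared modulus via $A^2+B^2$ to get degree at most $2r$. If anything, your treatment of the lift $\widetilde{U}_i$ (with the wire reindexing) is more careful than the paper's shorthand $U_{g_i,C}\otimes \mathbf{1}_{N-2^{k_i}}$, but the argument is the same.
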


\begin{proof}
Let $N = 2^{n+t}$.
    Observe that $\la \mathbf{x} \mathbf{z}\;|\; U(C) \;|\; \mathbf{x}' \mathbf{z}'\ra$ is the 
    $(2^t\cdot x+z,2^t\cdot x'+z')$-th entry of the 
$N\times N$ unitary matrix $U(C)$ where $U(C)$ is a product of 
$r$ many 
$N \times N$ unitary matrices 
of the form $U_{g_i,C} \otimes \mathbf{1}_{N - 2^{k_i}}$.

Observe that the $(h,h')$-th entry of $U_{g_i,C}$ equals $v^{(i)}_{h,h'} + \sqrt{-1}\cdot w^{(i)}_{h,h'}$. 
Thus, the real and the imaginary parts of each entry of $U(C)$ is a polynomial in the real numbers
$v^{(i)}_{h,h'} , w^{(i)}_{h,h'}, 1 \leq i \leq r, 1 \leq h,h' \leq 2^{k_i}$
having degree at most $r$, and
taking the square of the modulus gives a polynomial of degree at most $2 r$.
\end{proof}

Next, we count the number of distinct ``topologies'' underlying quantum circuits of bounded size.
In order to make this precise we introduce the following definition.

\begin{definition}
\label{def:topology}
    For any quantum circuit $C$, we will denote by $\widetilde{C}$ the quantum circuit obtained by replacing for each $g \in \mathcal{G}(C)$, the gate $g$, by $g_0$ with the same input and output  but with $U_{g_0} = \mathbf{1}_{2^{k_g}}$ where $k_g$ is the arity of $g$.
    
    For any two quantum circuits 
    $C,D \in \mathcal{C}_{q,n,r,t}$, we will say that
    $C,D$ are equivalent (denoted $C \sim_{q,n,r,t} D$) if $\widetilde{C} = \widetilde{D}$. 
    Clearly $\sim_{q,n,r,t}$
    is an equivalence relation.

    We denote by 
    $\mathcal{T}_{q,n,r,t}$ the set of equivalence classes of $\sim_{q,n,r,t}$. 
\end{definition}

\begin{remark}
    The set $\mathcal{T}_{q,n,r,t}$ should be thought of as the set of underlying ``topologies''
    of quantum circuits with $r$ gates with fan-ins at most $q$,
    with $n$ qubits as input and $t$ ancillary bits. 
\end{remark}

The following lemma establishes an upper bound on the cardinality of $\mathcal{T}_{q,n,r,t}$ in terms of
$q,n,r$ and $t$.

\begin{lemma}
\label{lem:topology}
    For  every $q, n,r,t > 0$,  
    \[
    \card(\mathcal{T}_{q,n,r,t}) \leq  q^r \cdot (n+t)^{q\cdot r}.
    \]
\end{lemma}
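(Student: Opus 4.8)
The plan is to count equivalence classes directly by counting the combinatorial data that determines $\widetilde{C}$. By Definition~\ref{def:topology}, two circuits in $\mathcal{C}_{q,n,r,t}$ are equivalent precisely when they agree on the data listed in items 1.\ and 2.\ of the definition of a quantum circuit in $\mathcal{C}_q$, namely (i) the ordered tuple of gates $g_1,\dots,g_r$ and (ii) for each $i$, the ordered choice of $k_i \le q$ wires from among the $n+t$ available qubits. Since passing from $g_i$ to $g_{0}$ only replaces the unitary $U_{g_i}$ by the identity and retains the arity and the wiring, the equivalence class of $C$ is determined exactly by the sequence of arities $(k_1,\dots,k_r)$ together with, for each $i$, the ordered tuple of wires. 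So $\card(\mathcal{T}_{q,n,r,t})$ is at most the number of such combinatorial configurations.

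The key steps, in order: First, bound the number of choices for the arity vector $(k_1,\dots,k_r) \in \{1,\dots,q\}^r$ by $q^r$. Second, for a fixed arity vector, bound the number of ways to assign to gate $i$ an ordered list of $k_i$ elements of the set $\{x_0,\dots,x_{n-1},z_1,\dots,z_t\}$ of size $n+t$: there are at most $(n+t)^{k_i}$ such ordered lists (allowing repetitions is a harmless overcount, or one uses the falling factorial $(n+t)(n+t-1)\cdots$ which is smaller), hence at most $\prod_{i=1}^r (n+t)^{k_i} = (n+t)^{k_1+\cdots+k_r} \le (n+t)^{q r}$ total wiring choices since each $k_i \le q$. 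Third, multiply: $\card(\mathcal{T}_{q,n,r,t}) \le q^r \cdot (n+t)^{q r}$, which is the claimed bound. One should also remark that the ordering of the gates is already encoded in indexing them $g_1,\dots,g_r$, so there is no extra factor of $r!$; and that circuits with genuinely fewer than $r$ gates, or gates acting on fewer wires, are subsumed by allowing some gates to be identities and by the repetition-allowing overcount, so no case is missed.

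I do not expect any serious obstacle here — this is a pure counting argument with no analysis or algebraic geometry involved. The only point requiring a little care is making sure the overcounting is in the right direction (we want an upper bound, so overcounting is fine) and that the notion of equivalence in Definition~\ref{def:topology} really does collapse exactly the unitary data and nothing combinatorial; both are immediate from the definitions. A secondary minor point is to decide whether to write the wiring bound as $(n+t)^{k_i}$ (ordered tuples with repetition) or as a falling factorial (ordered tuples without repetition, which is what a physical circuit requires); either works for the stated inequality, and the cleaner exposition is to use the weaker bound $(n+t)^{k_i}$ and absorb everything into $(n+t)^{qr}$.
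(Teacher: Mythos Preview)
Your proposal is correct and follows essentially the same counting argument as the paper: bound the arity tuples by $q^r$, bound the ordered wire choices for each gate by $(n+t)^{k_i} \le (n+t)^q$, and multiply over the $r$ gates. The paper writes the per-gate count as the falling factorial $(k_i)!\binom{n+t}{k_i}$ and then bounds it by $(n+t)^{k_i}$, which is exactly the alternative you mention.
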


\begin{proof}
Each of the $r$ gates have arities $\leq q$. The number of $r$-tuples $(k_1,\ldots,k_r)$ of possible arities, where $1 \leq k_i \leq q$ is equal to $q^r$. The equivalence class $T$ of a circuit is determined by the 
ordered choice of  $k_i \leq q$ from amongst the input qubits $x_1,\ldots,x_n$ and the $t$ ancillary bits $z_1,\ldots,z_t$. The number of choices for the 
$i$-th gate $g_i$ is 
\[
(k_i)! \binom{n+t}{k_i} \leq (n+t)^{k_i} \leq (n+t)^q.
\]
Since there are $r$ gates
the total number of choices is bounded by 
\[
q^r \cdot (n+t)^{q \cdot r}.
\]
\end{proof}

\begin{remark}
    \label{rem:crude}
    Lemma~\ref{lem:topology} gives a rather crude bound that suffices for our purpose. It is possible to prove a much tighter upper bound (see for example \cite[Page 88, Lemma 2.1]{Wegener}).
\end{remark}

We will now fix an equivalence class $T \in \mathcal{T}_{q,n,r, 1+t}$ and ask how many distinct Boolean 
functions can be computed  (cf. Definition~\ref{def:quantum-boolean}) using quantum circuits belonging to $T$. Our strategy is to identify a finite set of real polynomials (denoted by
$\mathcal{P}_T$ below), and associate to each Boolean function $f:\{0,1\}^n \rightarrow \{0,1\}$ which
is computed  by some circuit $C \in T$ a unique realizable sign condition (see below). Our bound on the number of Boolean functions that can be computed
using circuits in $T$ will then follow from an upper bound on the number of realizable sign conditions 
on $\mathcal{P}_T$ which is furnished by a classical result in real algebraic geometry.

\subsection{Bound on the number of realizable sign condition}
\label{subsec:rag}
Suppose that $\mathcal{P}$ be a finite set of polynomials in $\mathbb{R}[X_1,\ldots,X_k]$.
For each $x \in \mathbb{R}^k$, and $P \in \mathbb{R}[X_1,\ldots,X_k]$, we define 
\[
\sign(P(x)) = 
\begin{cases}
    +1, \text{if $P(x) > 0$}, \\
    -1, \text{if $P(x) < 0$}, \\
    0,  \;\text{if $P(x) = 0$}.
\end{cases}
\]

Similarly, for a finite tuple of polynomials $\mathcal{P}$ and $x \in \mathbb{R}^k$, we similarly define
\[
\sign(\mathcal{P}(x)) = (\sign(P))_{P \in \mathcal{P}} \in \{0,1,-1\}^{\mathcal{P}}.
\]

We say that an element $\sigma \in \{0,1,-1\}^{\mathcal{P}}$ is a \emph{realizable sign condition of 
$\mathcal{P}$} if there exists $x \in \mathbb{R}^k$ such that 
\[
\sign(\mathcal{P}(x)) = \sigma.
\]

We denote the set of realizable sign conditions of a finite tuple $\mathcal{P}$ of polynomials in
$\mathbb{R}[X_1,\ldots,X_k]$ by $\Sign(\mathcal{P}) \subset \{0,1,-1\}^{\mathcal{P}}$.

Observe that if $\mathcal{P}$ has length $N$, then the cardinality of $\Sign(\mathcal{P})$ could potentially be as large as $3^N$. It is an important result in real algebraic geometry (with many applications), that if $k$ as well as the degrees of the polynomials in $\mathcal{P}$ are small compared to
$N$, then the $\card(\Sign(\mathcal{P}))$ is much smaller than $3^N$. The precise result 
that we need is the following.

\hide{
We will also need the following key result bounding the number of realizable sign conditions of a set of real polynomials. We first introduce a notation.


\begin{notation}[Sign conditions and their realizations]
\label{not:sign-condition}
Let $\mathcal{P}$ be a finite subset of $\mathbb{R}[X_1,\ldots,X_k]$.
For $\sigma \in \{0,1,-1\}^{\mathcal{P}}$, and $x \in \mathbb{R}^k$ we denote by
$\sign(\mathcal{P})(x) = (\sign(P(x)))_{P \in \mathcal{P}}$.

For $\sigma \in \{0,1,-1\}^{\mathcal{P}}$, we call the formula $\bigwedge_{P \in \mathcal{P}} (\sign(P) = \sigma(P))$
to be a 
\emph{sign condition on $\mathcal{P}$} and call its realization the \emph{realization of the sign condition $\sigma$}.
We denote the set of all realizable sign condition on $\mathcal{P}$ by $\Sign(\mathcal{P})$.
\end{notation}

We will need the following key proposition.
}

\begin{proposition} \cite[Proposition 7.31]{BPRbook2}
\label{prop:sign-condition}
Let $\mathcal{P}$ be a finite tuple of polynomials in $\mathbb{R}[X_1,\ldots,X_k]$.
Then, 
\[
\card(\Sign(\mathcal{P})) \leq \sum_{1 \leq j \leq k} \binom{N}{j} 4^j d(2d -1)^{k-1} = (O(N d))^k,
\]
where $N = \mathrm{length}(\mathcal{P})$ and $d = \max_{P \in \mathcal{P}} \deg(P)$.
\end{proposition}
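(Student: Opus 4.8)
The plan is to bound $\card(\Sign(\mathcal{P}))$ by the total number of connected components of all the realizations $\mathcal{R}(\sigma)$, and then to estimate this total by reducing, through an infinitesimal perturbation, to the Milnor--Thom bound on the number of connected components of a real algebraic hypersurface.

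First I would observe that distinct realizable sign conditions have disjoint non-empty realizations, so $\card(\Sign(\mathcal{P})) \le \sum_{\sigma \in \Sign(\mathcal{P})} b_0(\mathcal{R}(\sigma))$, where $b_0$ denotes the number of connected components. Hence it suffices to bound this sum of $b_0$'s, and the factor $d(2d-1)^{k-1}$ in the claimed estimate signals that the Milnor--Thom inequality will be the ultimate source of the bound. I would recall (or prove) that inequality: for a polynomial $Q \in \mathbb{R}[X_1,\dots,X_k]$ of degree $\le d$, the zero set $\{Q = 0\}$ has at most $d(2d-1)^{k-1}$ connected components. The standard route is Morse-theoretic: perturb $Q$ and intersect with a large ball so as to replace $\{Q=0\}$ by a bounded nonsingular hypersurface having at least as many components, then bound the number of components by the number of critical points of a generic linear projection restricted to that hypersurface; Bezout's theorem bounds those critical points, and the passage $d \rightsquigarrow 2d-1$ records the degree doubling incurred by the perturbation.

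Next I would localize each connected component $c$ of each realization to a small ``active'' subset of the polynomials. A component $c$ lies in a stratum on which some subset of the $P_i$ vanish identically; a genericity/dimension argument shows that one may select at most $k$ of these vanishing polynomials whose common zero locus cuts out the stratum of $c$, which produces the combinatorial factor $\binom{N}{j}$ for the choice of a $j$-element active set, summed over $1 \le j \le k$. For a fixed active set $I$ with $|I| = j$, I would then deform the polynomials $P_i$, $i \in I$, to $P_i \pm \epsilon$ over the real closed field $\mathbb{R}\langle \epsilon \rangle$ of Puiseux series, so that the cells adjacent to the stratum become connected components of bounded nonsingular algebraic sets of degree $\le d$; each active polynomial contributes only a bounded number of sign/perturbation choices, which accounts for the factor $4^j$, while Milnor--Thom applied to the deformed set contributes $d(2d-1)^{k-1}$. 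Multiplying and summing gives $\sum_{1 \le j \le k} \binom{N}{j} 4^j d(2d-1)^{k-1}$, and the asymptotic form $(O(Nd))^k$ then follows from $\binom{N}{j} \le N^j$ with the $j=k$ term dominating.

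The main obstacle is the perturbation step together with its bookkeeping. Making the deformation rigorous requires working over $\mathbb{R}\langle\epsilon\rangle$ and invoking the transfer principle to return to $\mathbb{R}$, ensuring the deformed varieties are simultaneously bounded and nonsingular, and --- most delicately --- proving that every component of every realization is genuinely ``captured'' by, and not over-counted among, the critical points on the perturbed varieties indexed by the active subsets. Controlling this capture-without-double-counting, so that the per-subset Milnor--Thom bound assembles into exactly the stated sum, is where the substantive work lies; the reduction in the first paragraph and the asymptotic simplification at the end are routine by comparison.
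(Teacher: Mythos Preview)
The paper does not prove this proposition at all: it is quoted verbatim as \cite[Proposition~7.31]{BPRbook2} and used as a black box, with only a remark noting that the bound in \cite{BPR02} is actually on the total number of connected components of the realizations. So there is no ``paper's own proof'' to compare against.

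That said, your sketch is a reasonable outline of how the result is established in the cited reference. The reduction from $\card(\Sign(\mathcal{P}))$ to the sum of $b_0$'s, the appeal to the Oleinik--Petrovskii/Milnor--Thom bound $d(2d-1)^{k-1}$, the infinitesimal deformation over $\mathbb{R}\langle\eps\rangle$, and the combinatorial accounting via subsets of size $\leq k$ are indeed the ingredients of the proof in \cite{BPRbook2,BPR02}. Your identification of the main difficulty --- showing that every connected component of every sign condition is captured exactly once by a critical point on some perturbed variety indexed by a subset of size at most $k$ --- is also accurate; in the reference this is handled by a careful analysis of the limits of the perturbed components as $\eps \to 0$. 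For the purposes of this paper, however, none of this is needed: the proposition is invoked, not proved.
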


\begin{remark}
Various versions of Proposition~\ref{prop:sign-condition} are known (see for example \cite{Warren,Alon}).
It appears in the more precise form as stated above in \cite{BPR02}, where the bound is proved on the 
number of connected components of the realizable sign conditions (i.e. the sum of the zero-th Betti number of the realizations of each sign condition).
Note this is a priori larger than just the number of realizable sign conditions.
\end{remark}

We now return to the proof of Theorem~\ref{thm:main} by proving an upper bound on the number 
of distinct Boolean functions that can be computed by quantum circuits belonging to a fixed equivalence class $T \in \mathcal{T}_{q,n,r,1+t}$.

First notice that for each  $T \in \mathcal{T}_{q,n,r,1+t}$ and $C,C' \in T$, 
$0 \leq x,x' \leq 2^{n}-1$, $y,y' \in \{0,1\}$ and $0 \leq z,z' \leq 2^{t}-1$, 
    \[
    P_{| \mathbf{x}y\mathbf{z}\ra,|\mathbf{x}'y'\mathbf{z}'\ra,C} =
    P_{| \mathbf{x}y\mathbf{z}\ra,|\mathbf{x}'y'\mathbf{z}'\ra,C'} 
    \]
(see Lemma~\ref{lem:coefficients}).
We will denote  by $P_{| \mathbf{x}y\mathbf{z}\ra,|\mathbf{x}'y'\mathbf{z}'\ra,T}$ the polynomial
$P_{| \mathbf{x}y\mathbf{z}\ra,|\mathbf{x}'y'\mathbf{z}'\ra,C}$ for some and hence all $C \in T$.

\begin{lemma}
\label{lem:0-1}
    For each $T \in \mathcal{T}_{q,n,r, 1+t}$, 
    the number of distinct
    Boolean functions computed by some quantum circuit $C \in T$ 
    is bounded by 
    \[
    (2^{n} \cdot r)^{O(2^{2q} \cdot r)}. 
    \]
\end{lemma}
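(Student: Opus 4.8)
Fix $T \in \mathcal{T}_{q,n,r,1+t}$. The idea is to build a single finite tuple of real polynomials $\mathcal{P}_T$ in the variables $\left(V^{(i)}_{h,h'}, W^{(i)}_{h,h'}\right)$ (of which there are at most $k = 2 r \cdot 2^{2q}$, since each gate contributes at most $2\cdot(2^q)^2 = 2^{2q+1}$ real coordinates) whose realizable sign conditions encode, for every $C \in T$, \emph{which} Boolean function (if any) the circuit $C$ computes in the sense of Definition~\ref{def:quantum-boolean}. By Lemma~\ref{lem:coefficients}, for each $C \in T$ and each pair of basis states the quantity $|\la \mathbf{x} y \mathbf{z}\,|\,U(C)\,|\,\mathbf{x}' y' \mathbf{z}'\ra|^2$ is given by the polynomial $P_{|\mathbf{x}y\mathbf{z}\ra,|\mathbf{x}'y'\mathbf{z}'\ra,T}$ evaluated at the coordinates of the gate matrices, of degree at most $2r$. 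The relevant inequalities in \eqref{eqn:def:quantum-boolean} compare these quantities to $1/2$, so I would take
\[
\mathcal{P}_T = \left\{\, P_{|\mathbf{x}'\,y\,\mathbf{0}\ra,|\mathbf{x}\,0\,\mathbf{0}\ra,T} - \tfrac12 \;:\; 0 \leq x,x' \leq 2^n - 1,\ y \in \{0,1\}\,\right\}.
\]

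\textbf{Key steps.} First I would record that $\mathrm{length}(\mathcal{P}_T) = N \leq 2 \cdot 2^{2n} = 2^{2n+1}$ and $\max_{P \in \mathcal{P}_T}\deg(P) \leq 2r$, and that the number of variables is $k \leq 2^{2q+1} r$. Second, I would observe the decisive point: whether a circuit $C \in T$ computes a Boolean function $f$, and if so \emph{which} $f$, is completely determined by the sign condition $\sign(\mathcal{P}_T)(v,w)$ at the point $(v,w)$ of gate coordinates realized by $C$. Indeed, $C$ computes \emph{some} $f$ iff for every $x$ there is exactly one $(x',y)$ with $x'=x$, $y \in \{0,1\}$ at which the corresponding polynomial in $\mathcal{P}_T$ is $>1/2$ (equivalently, has sign $+1$ as an element of $\mathcal{P}_T$) and all others are $<1/2$; and when this holds, $f(x)$ is read off as the $y$ for which $y \oplus f(x) = 0$, i.e. from which of the two polynomials attached to $x$ is positive. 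Hence the map sending a computed Boolean function $f$ to the set of sign conditions realizing it is well-defined, and distinct $f$'s give disjoint nonempty sets of realizable sign conditions. Therefore the number of distinct Boolean functions computed by circuits in $T$ is at most $\card(\Sign(\mathcal{P}_T))$. Third, I would apply Proposition~\ref{prop:sign-condition}, which gives
\[
\card(\Sign(\mathcal{P}_T)) \leq (O(Nd))^k = \left(O\!\left(2^{2n+1}\cdot 2r\right)\right)^{2^{2q+1} r} = \left(2^{n}\cdot r\right)^{O(2^{2q}\cdot r)},
\]
absorbing the constants and the $2^{n}$-vs-$2^{2n}$ discrepancy into the $O(\cdot)$ in the exponent (the exponent $k$ already carries a factor linear in $r$, so the base only needs to be polynomial in $2^n$ and $r$).

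\textbf{Main obstacle.} The routine parts are the degree/length/dimension bookkeeping. The one place needing care is the claim that the computed function $f$ is genuinely a function of the sign condition alone — i.e. that the "exactly one positive among the pair" structure, together with the strictness of the inequalities in \eqref{eqn:def:quantum-boolean} (no gap assumed), makes the decoding map $(\text{sign condition}) \mapsto f$ single-valued on realizable sign conditions, and that two circuits computing \emph{different} functions cannot share a sign condition. This is immediate once one writes it out, since $\sign(P)=+1$ is exactly $P>1/2$ and $\sign(P)\in\{-1,0\}$ is exactly $P \leq 1/2$, matching \eqref{eqn:def:quantum-boolean} verbatim; but it is the conceptual crux of the argument and is where I would be most careful. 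Finally, I note this bounds only the functions computed by circuits in a \emph{single} $T$; combining with Lemma~\ref{lem:topology} over all $T \in \mathcal{T}_{q,n,r,1+t}$ is what yields Theorem~\ref{thm:main}, but that combination is carried out after this lemma.
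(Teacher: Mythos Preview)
Your proposal is correct and follows essentially the same route as the paper: define $\mathcal{P}_T$ as the family $P_{|\psi\ra,|\psi'\ra,T}-\tfrac12$ indexed by $|\psi\ra=|\mathbf{x}\,0\,\mathbf{0}\ra$, $|\psi'\ra=|\mathbf{x}'\,y\,\mathbf{0}\ra$, observe that any $C\in T$ computing $f$ realizes a sign condition $\sigma_f\in\{+1,-1\}^{\mathcal{P}_T}$ depending only on $f$ (so distinct $f$'s give distinct realizable sign conditions), and then plug $N\le 2^{2n+1}$, $d\le 2r$, $k\le 2\cdot 2^{2q}\cdot r$ into Proposition~\ref{prop:sign-condition}. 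The only minor slip is in your ``main obstacle'' paragraph, where ``$\sign(P)\in\{-1,0\}$ matches \eqref{eqn:def:quantum-boolean} verbatim'' is not quite right (the definition demands strict $<\tfrac12$, so the $0$ case never arises for a circuit that actually computes some $f$); this is harmless for the injectivity you need.
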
 

\begin{proof}
Let $\mathcal{P}_{T}$  denote the tuple of polynomials
\[
\left(
P_{|\psi\ra,|\psi'\ra,T} - 1/2
\right)_
{
\substack{|\psi\ra = |\mathbf{x} \; 0 \; \mathbf{0}\ra, 
|\psi'\ra = |\mathbf{x}' \; y \; \mathbf{0} \ra \\ 
0 \leq x,x' \leq 2^{n}-1,\\
y \in \{0,1\}
}
}.
 \]

Suppose that $f: \{0,1\}^n \rightarrow \{0,1\}$ is a Boolean function computed  
by $C \in T$. We will follow the notation introduced in Lemma~\ref{lem:coefficients} and denote by

\[
\left(v^{(i)}_{h,h'}, w^{(i)}_{h,h'}\right)_{\substack{1 \leq i \leq r,\\ 1 \leq h,h' \leq 2^{k_i}}}
\]
the tuple
of real numbers which correspond to the real and imaginary parts of the unitary matrices corresponding to the gates of $C$. 

Then, for $0 \leq x,x' \leq 2^{n}-1$,
$|\psi\ra =  |\mathbf{x} \; 0 \;  \mathbf{0}\ra$,
$|\psi'\ra = |\mathbf{x}' \; y \; \mathbf{0} \ra$,
the sign of the polynomial 
\[
P_{|\psi \ra,|\psi'\ra,T}
\]
evaluated at the point
\[
\left(v^{(i)}_{h,h'}, w^{(i)}_{h,h'}\right)_{\substack{1 \leq i \leq r, \\ 1 \leq h,h' \leq 2^{k_i}}}
\]
(following the notation introduced in Lemma~\ref{lem:coefficients})

$$
=
\begin{cases}
             1 & \text{if $y \oplus f(x) = 0$ and $x = x'$}, \\
            -1 & \text{otherwise}.                 
\end{cases}
$$

Denote by $\sigma_{f} \in \{0,1,-1\}^{\mathcal{P}_{T}}$ the corresponding 
sign condition on $\mathcal{P}_{T}$. Hence, each Boolean function $f:\{0,1\}^n \rightarrow \{0,1\}$ 
computable by a quantum circuit $C$ in $T$ 
defines 
a realizable sign condition $\sigma_f \in \Sign(\mathcal{P}_T)$. Moreover, if $f \neq f'$, then
$\sigma_{f} \neq \sigma_{f'}$. This implies that the number of distinct Boolean functions
$f:\{0,1\}^n \rightarrow \{0,1\}$ 
which are computable by a quantum circuit $C$ in $T$, 
is bounded by $\card(\Sign(\mathcal{P}_{T}))$.

Now, $\card(\mathcal{P}_{T}) = 2^n \cdot 2^{n+1} \leq  2^{2n+1}$.
The degrees of the polynomials in $\mathcal{P}_T$ are bounded by $2 r$, and the number of indeterminates 
by $2 \cdot 2^{2 q} \cdot r$.
We obtain using Proposition~\ref{prop:sign-condition} that 
\[
\card(\Sign(\mathcal{P}_T)) \leq (O(2^{2n+1} \cdot 2\cdot r))^{2 \cdot 2^{2q}\cdot  r} = (2^{n} \cdot r)^{O(2^{2q} \cdot r)}.
\]
\end{proof} 

\begin{remark}
\label{rem:unitarity}
    Note that in the proof of Lemma~\ref{lem:0-1} we are not using the fact the real dimension of the unitary group $\mathbf{U}(2^k)$ is equal to $2^{2k}$. It is possible to take this into account and use a more refined estimate on the number of realizable sign conditions whose combinatorial part (i.e. the part depending on the number of polynomials)  depends on the dimension of the ambient real variety (see \cite{BPR02}). However, this would only improve the constant in our theorem at the cost of introducing more technicalities and so we avoid making this more refined analysis.   
\end{remark}

\begin{proof}[Proof of Theorem~\ref{thm:main}]
Multiplying the bounds in  Lemmas~\ref{lem:topology} and \ref{lem:0-1} 
we obtain that 
that the number of Boolean functions $\{0,1\}^n \rightarrow \{0,1\}$ that can be computed 
 using quantum circuits in $\mathcal{C}_{q,n,r,t} $ 
\[
q^r \cdot (n+t)^{q \cdot r} \cdot (2^{n} \cdot r)^{O(2^{2q} \cdot r)}  =  t^{q \cdot r} \cdot  (2^{n}\cdot  r)^{O(2^{2q} \cdot r)}.
\]
\end{proof}

\begin{proof}[Proof of Corollary~\ref{cor:main}]
From Theorem~\ref{thm:main} there exists $c > 0$ such that the number of distinct Boolean functions
$f: \{0,1\}^n \rightarrow \{0,1\}$ that can be computed  by quantum circuits 
in $\mathcal{C}_{q,n,r,2^n}$ 
is bounded by 
$(2^{n} \cdot r)^{c \cdot r}$ for some constant $ c=c_q$ depending only on $q$.

Suppose that $r \leq \frac{2^n}{4 c n}$.

Then, the number of Boolean functions that  can be computed by such circuits is bounded by 
\[
 (2^{n} \cdot  r)^{c \cdot  r}
 \leq (2^n  \cdot \frac{2^n}{4c n})^{\frac{2^n}{4n}} \leq (2^{2n})^{\frac{2^n}{4n}} = 2^{2^{n-1}}.
\]
Thus, the fraction of all Boolean formulas in $n$ variables that can be computed by such circuits is bounded by 
\[
\frac{2^{2^{n-1}}}{2^{2^{n}}} = 2^{-2^{n-1}} = o(1).
\]
\end{proof}

\section{Alternative approach using approximation and counting}
\label{sec:alternative}
In this section we explore an alternative approach towards proving Corollary~\ref{cor:main}.
It is a classical result due to Solovay and Kitaev \cite{Kitaev-Solovay} that there exists $c >0$, such that for
for any fixed subset $\mathcal{U}' \subset \mathbf{U}_2$ generating a dense subgroup of 
$\mathbf{U}(2)$
and each $\eps > 0$,
an arbitrary unitary matrix $U \in \mathbf{U}(2)$
can be approximated by an element $U' \in \mathbf{U}(2)$ with $||U - U'|| \leq \eps$ 
where $U'$ is a product of at most $\log^c(1/\eps)$ elements of $\mathcal{U}'$ and $||\cdot||$ is the operator norm. 

There has been many later improvements on this fundamental result reducing the value of the 
constant $c$. In particular, it is possible to choose a finite set of gets (Clifford and Toffoli gates)
for which one can take $c=1$ \cite{Maslov-et-al,Selinger}. Using this fact it is not too difficult to prove the following.

\begin{proposition}
\label{prop:approximation}
    For each $\eps >0$, and any quantum circuit $C \in \mathcal{C}){q,n,r,t}$, there exists another circuit
    $C' \in \mathcal{C}_{q,n,r',t}$ which uses only Clifford and Toffoli gates, such that 
    $||U(C) - U(C')|| \leq \eps$ and 
    \[
        r' = O(r \log r + r \log(1/\eps)).
    \]
\end{proposition}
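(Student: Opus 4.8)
The plan is to reduce Proposition~\ref{prop:approximation} to the Solovay--Kitaev theorem (in its sharpened form with $c=1$) applied gate by gate, and then control the error accumulation along the circuit. First I would recall that $U(C)$ is a product of $r$ factors, each of the form $U_{g_i,C}\otimes \mathbf{1}_{N-2^{k_i}}$ with $k_i\le q$. It suffices to approximate each single-gate unitary $U_{g_i,C}\in\mathbf{U}(2^{k_i})$ by a short word $W_i$ in Clifford and Toffoli gates, and then take $C'$ to be the circuit obtained by replacing each $g_i$ with the corresponding sequence of standard gates acting on the same wires. Since $\mathbf{U}(2^{k_i})$ has bounded dimension (as $k_i\le q$ is fixed) and Clifford$+$Toffoli generate a dense subgroup of each such $\mathbf{U}(2^{k_i})$, the sharpened Solovay--Kitaev theorem gives a word $W_i$ of length $O(\log(1/\delta))$ with $\|U_{g_i,C}-W_i\|\le \delta$ for any target accuracy $\delta>0$; the implied constant depends only on $q$.

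The key step is the error propagation estimate. I would use the standard fact that the operator norm is submultiplicative and unitarily invariant, so that for unitaries one has
\[
\Bigl\|\prod_{i=1}^r A_i - \prod_{i=1}^r B_i\Bigr\| \le \sum_{i=1}^r \|A_i - B_i\|,
\]
which follows by a telescoping argument: write the difference as $\sum_{i} (B_1\cdots B_{i-1})(A_i - B_i)(A_{i+1}\cdots A_r)$ and bound each summand by $\|A_i-B_i\|$ since the flanking products are unitary. Applying this with $A_i = U_{g_i,C}\otimes\mathbf{1}$ and $B_i = W_i\otimes\mathbf{1}$ (tensoring with an identity does not change the operator norm) yields $\|U(C)-U(C')\|\le \sum_i \|U_{g_i,C}-W_i\| \le r\delta$. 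Choosing $\delta = \eps/r$ makes this at most $\eps$, and then each word $W_i$ has length $O(\log(r/\eps)) = O(\log r + \log(1/\eps))$, so the total gate count is $r' = \sum_i |W_i| = O(r\log r + r\log(1/\eps))$, as claimed. One also has to note that the number of ancillary qubits is unchanged (Clifford and Toffoli gates act on at most three of the existing wires), so $C'\in\mathcal{C}_{q,n,r',t}$ with the same $t$.

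The main obstacle, such as it is, is purely bookkeeping: one must make sure the Solovay--Kitaev approximation is applied at a fixed arity, so that the implied constants (both in the word length and in the density/generation hypothesis) are uniform in the gate being approximated and depend only on $q$. This is fine because $\mathcal{U}_q$ consists of gates of arity at most $q$, a bounded quantity, so there are only finitely many groups $\mathbf{U}(2^k)$, $k\le q$, to deal with, and Clifford$+$Toffoli restricted to any $k\le q$ wires generates a dense subgroup of the corresponding unitary group. A minor point to address is that the standard Solovay--Kitaev statement is usually phrased for $\mathbf{SU}(2)$ or $\mathbf{U}(2)$; one invokes the known generalization to $\mathbf{SU}(d)$ (and $\mathbf{U}(d)$, absorbing the global phase) for the finitely many relevant values $d = 2^k$, $k\le q$. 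With these uniformity remarks in place the proof is a one-paragraph telescoping estimate.
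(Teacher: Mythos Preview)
Your proposal is correct and follows essentially the same approach as the paper's proof sketch: replace each of the $r$ gates by a Clifford$+$Toffoli approximation to accuracy $\eps/r$, so that the total error is at most $\eps$ and the gate count becomes $O(r\log(r/\eps))$. You have simply filled in the telescoping estimate and the uniformity-in-$q$ bookkeeping that the paper leaves implicit.
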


\begin{proof}[Proof sketch]
Replace each gate in $C$ by a circuit using only Clifford and Toffoli gates such that 
the error in norm is bounded by $\eps/r$. Since there are $r$ gates in $C$ the total error will be bounded by $\eps$.   
\end{proof}

An approach towards proving Corollary~\ref{cor:main} would then be as follows. Given 
any $C \in \mathcal{C}_{q,n,r,t}$ computing a Boolean function $f$, let $C'$ be a circuit
using only Clifford and Toffoli gates that approximates $U(C)$ sufficiently well so that 
$C'$ also computes $f$. Using Proposition~\ref{prop:approximation} one would obtain an upper bound
on the size of $C'$ and since the number of circuits with Clifford and Toffoli gates is finite
one can then use a counting argument. However, Definition~\ref{def:quantum-boolean} gives no room for any approximation, as any error could mean that the new circuit $C'$ does not compute $f$.

One can make the Definition~\ref{def:quantum-boolean} more stringent (thus easier for proving lower bounds). For instance, consider the following definition which we call $\delta$-stringent.

\begin{definition}[($\delta$-stringent)-computation of Boolean functions by a quantum circuit]
\label{def:quantum-boolean-delta-stringent}
Let $0 \leq \delta < 1/2$.
A quantum circuit $C$ on $n$ qubits denoted $x_0, \ldots,x_{n-1}$
and $t+1$ ancillary bits 
denoted by $y,z_1,\ldots,z_t$,
 \emph{computes $f$ $\delta$-stringently}, 
if 
for each $x,x', 0 \leq x,x' \leq 2^{n}-1$, $y \in \{0,1\}$, 

\hide{
\begin{eqnarray*}
|a_{j,j',y,0}|^2 &> & 1/2 \mbox{ if $j' = j$, $y = 0$ and $f(j) = 1$ or $j' = j, y =1$ and $f(j) = 0$}, \\
|a_{j,j',y,0}|^2 &< & 1/2 \mbox{ otherwise},
\end{eqnarray*}
}
\begin{eqnarray}
\label{eqn:def:quantum-boolean}
\nonumber
|\la \mathbf{x}' \;y \; \mathbf{0}  \; |\; U(C) \;|\; \mathbf{x} \; 0 \; \mathbf{0} \ra|^2 &>& 1/2+\delta \text{ if $x' = x$ and $y\oplus f(x_0,\ldots,x_{n-1}) = 1$},\\  
&<&  1/2 - \delta \text{ otherwise.}
\end{eqnarray}

\hide{
$a_{j,j',y,z}$ is defined by the equation,
\[
U(C)(|j \ra \otimes |y\ra \otimes 
| z \ra
=
\sum_{0 \leq j' \leq 2^{n}-1, 0 \leq z \leq 2^{t}-1} a_{j,j',y,z} |j'\ra \otimes |y\ra  \otimes |0\ra \otimes |z\ra.
\]
}
($|\mathbf{x} \ra \otimes |y \ra \otimes |\mathbf{0}\ra$ is abbreviated as $|\mathbf{x} \; y\; \mathbf{0}\ra$).
\end{definition}

\begin{remark}
    Definition~\ref{def:quantum-boolean} is the special case of being $0$-stringent.
\end{remark}

One can carry through the program sketched earlier using approximation and counting, if we take
Definition~\ref{def:quantum-boolean-delta-stringent} as our definition for quantum circuit computing Boolean function with $\delta > 0$. In this case one would need to replace $C$ by a circuit $C'$ using Clifford and Toffoli gates which approximates $U(C)$ within $\delta/2$ in max-norm. 
Using Proposition~\ref{prop:approximation} one can take $r' = O(r \log r + r \log(1/\delta))$.
Since the number of such circuits is bounded by $n^{O(r \log r + r \log(1/\delta))}$.
From the inequality
\[
n^{O(r \log r + r \log(1/\delta))} \geq 2^{2^n}
\]
one derives the lower bound 
\[
r \geq \Omega( \min(2^n/(n \log n), 2^n/((\log(1/\delta) \log n))))
\] 
from which one can conclude that almost all Boolean functions need quantum circuits of size
\[
\Omega( \min(2^n/(n \log n), 2^n/((\log(1/\delta) \log n)))).
\]

Notice that the above lower bound is worse than that in Corollary~\ref{cor:main}, and moreover goes to $0$ as
$\delta \rightarrow 0$ and thus does not produce any meaningful lower bound for $\delta =0$ (which is the case in Definition~\ref{def:quantum-boolean}).

\section{Conclusion and future work}
\label{sec:conclusion}
We have introduced a new algebraic technique for proving lower bounds in quantum complexity theory. This 
might have applications in proving lower bounds for other problems in quantum complexity theory. 
One important feature of our method is that it does not need 
unitarity of the gates (see Remark~\ref{rem:unitarity}). This may be relevant for proving results about the complexity class $\mathbf{PostBQP}$ \cite{Aaronson}. We leave this for future work.

\section*{Acknowledgements} We thank Sergey Bravyi, Elena Grigorescu, Dmitri Maslov and Eric Samperton for their comments on a draft version of the paper which helped to improve the paper.

\bibliographystyle{plain}
\bibliography{master}

\def\cprime{$'$}
\begin{thebibliography}{10}

\bibitem{Aaronson}
Scott Aaronson.
\newblock Quantum computing, postselection, and probabilistic polynomial-time.
\newblock {\em Proc. R. Soc. Lond. Ser. A Math. Phys. Eng. Sci.},
  461(2063):3473--3482, 2005.

\bibitem{Alon}
N.~Alon.
\newblock Tools from higher algebra.
\newblock In {\em Handbook of combinatorics, Vol.\ 1,\ 2}, pages 1749--1783.
  Elsevier, Amsterdam, 1995.

\bibitem{BPR02}
S.~Basu, R.~Pollack, and M.-F. Roy.
\newblock On the {B}etti numbers of sign conditions.
\newblock {\em Proc. Amer. Math. Soc.}, 133(4):965--974 (electronic), 2005.

\bibitem{BPRbook2}
S.~Basu, R.~Pollack, and M.-F. Roy.
\newblock {\em Algorithms in real algebraic geometry}, volume~10 of {\em
  Algorithms and Computation in Mathematics}.
\newblock Springer-Verlag, Berlin, 2006 (second edition).

\bibitem{chia2021quantum}
Nai-Hui Chia, Chi-Ning Chou, Jiayu Zhang, and Ruizhe Zhang.
\newblock Quantum meets the minimum circuit size problem, 2021.

\bibitem{GP2}
J.~Goodman and R.~Pollack.
\newblock There are asymptotically far fewer polytopes than we thought.
\newblock {\em Bull. Amer. Math. Soc. (N.S.)}, 14(1):127--129, 1986.

\bibitem{Kitaev-Solovay}
A.~Yu. Kitaev.
\newblock Quantum computations: algorithms and error correction.
\newblock {\em Uspekhi Mat. Nauk}, 52(6(318)):53--112, 1997.

\bibitem{Kitaev}
A.~Yu. Kitaev, A.~H. Shen, and M.~N. Vyalyi.
\newblock {\em Classical and quantum computation}, volume~47 of {\em Graduate
  Studies in Mathematics}.
\newblock American Mathematical Society, Providence, RI, 2002.
\newblock Translated from the 1999 Russian original by Lester J. Senechal.

\bibitem{Maslov-et-al}
Vadym Kliuchnikov, Dmitri Maslov, and Michele Mosca.
\newblock Fast and efficient exact synthesis of single-qubit unitaries
  generated by {C}lifford and {T} gates.
\newblock {\em Quantum Inf. Comput.}, 13(7-8):607--630, 2013.

\bibitem{Selinger}
Peter Selinger.
\newblock Efficient {C}lifford{$+T$} approximation of single-qubit operators.
\newblock {\em Quantum Inf. Comput.}, 15(1-2):159--180, 2015.

\bibitem{Shannon}
Claude.~E. Shannon.
\newblock The synthesis of two-terminal switching circuits.
\newblock {\em The Bell System Technical Journal}, 28(1):59--98, 1949.

\bibitem{Warren}
H.~E. Warren.
\newblock Lower bounds for approximation by nonlinear manifolds.
\newblock {\em Trans. Amer. Math. Soc.}, 133:167--178, 1968.

\bibitem{Wegener}
Ingo Wegener.
\newblock {\em The complexity of {B}oolean functions}.
\newblock Wiley-Teubner Series in Computer Science. John Wiley \& Sons, Ltd.,
  Chichester; B. G. Teubner, Stuttgart, 1987.

\end{thebibliography}
\end{document}